\newcommand{\abs}[1]{\left\vert #1 \right\vert}
\newcommand{\R}{{\mathbb R}}  
\newtheorem{theorem}{Theorem}
\newtheorem{lemma}[theorem]{Lemma}
\newtheorem{corollary}[theorem]{Corollary}
\newtheorem{definition}{Definition}
\newtheorem{remark}{Remark}
\title{Exploiting nodes symmetries to control synchronization and consensus patterns in multiagent systems}
\author{Davide Fiore, \emph{Student Member, IEEE}, Giovanni Russo, \emph{Member, IEEE}, and Mario di Bernardo, \emph{Fellow, IEEE}
\thanks{G. Russo and D. Fiore are with the Optimization and Control Group of IBM Research Ireland, IBM Technology Campus, Mulhuddart, Dublin 15, Ireland
        (email: dvd.fiore@gmail.com; grusso@ie.ibm.com).}%
\thanks{M. di Bernardo is with the Department of Electrical Engineering and Information Technology, University of Naples Federico II, Via Claudio 21, 80125 Naples, Italy, and also with the Department of Engineering Mathematics, University of Bristol, University Walk, BS8 1TR Bristol, U.K. 
        (email: mario.dibernardo@unina.it).}%
\thanks{D. Fiore started this work while being at University of Naples Federico II.}     

}
\begin{document}

\maketitle
\thispagestyle{empty}
\pagestyle{empty}

\begin{abstract}
We present new conditions to obtain synchronization and consensus patterns in complex network systems. The key idea is to exploit symmetries of the nodes' vector fields  to induce a desired synchronization/consensus pattern, where nodes are clustered in different groups each converging towards a different synchronized evolution.
We show that the new conditions we present offer a systematic methodology to design a  distributed network controller  able to drive a network of interest towards a desired synchronization/consensus pattern. 
\end{abstract}

\begin{IEEEkeywords}
Network analysis and control, Distributed control, Control of networks
\end{IEEEkeywords}

\section{Introduction}

\IEEEPARstart{N}{etwork} control is of utmost importance in many application areas, from computer science to power engineering, the emerging ``Internet of Things'' and  computational biology \cite{Liu_Bar_Slo_11,whalen2015observability}. Over the past few years there has been considerable interest in the problem of steering the dynamics of network agents towards some coordinated collective behavior, see e.g. \cite{Che_13} and references therein. Synchronization and consensus are two examples of such collective behavior where all the agents  {\em cooperate} in order for a common asymptotic behavior to emerge \cite{Dor_Bul_14}. 

Often, in applications, interactions between neighboring network nodes are not all collaborative as there might be certain nodes that have {\em antagonistic} relationships with neighbors. This is the case, for example, of social networks, where network agents might have different opinions \cite{Gal_96}, or biochemical and gene regulatory  networks, where interactions between nodes are either activations or inhibitions \cite{Sza_Ste_Per_06}. 
A convenient way to model the presence of collaborative and antagonistic relationships among nodes in  a network is to use {\em signed graphs} \cite{Hei_46}. Motivated by applications, an increasing number of papers in the literature is focusing on the study of the collective dynamics emerging in this type of  networks. For example, in \cite{zhang2001partial} partial synchronization of R\"{o}ssler oscillators over a ring is studied via the Master Stability Function (MSF), while in \cite{pogromsky200265} the same phenomenon is studied within the broader framework of symmetries intrinsic to the network structure (see also \cite{Rus_Slo_12} for a discussion on the interplay between symmetries and synchronization). Symmetries in the network topology have also been exploited in \cite{pecora2014cluster}, where the MSF is used to study local stability of synchronized clusters of nodes. A particularly interesting problem is the one considered in \cite{altafini2013consensus}, where sufficient conditions are given for a signed network of integrators to achieve a form of {\em ``agreed upon dissensus''}. The model proposed in \cite{altafini2013consensus} has been used in a number of applications, like flocking \cite{fan2013bipartite} and extended to the case of LTI systems and time-varying topologies, see e.g. \cite{zhang2014bipartite,zhang2016bipartite,valcher2014consensus,Pro_Mat_Cao_14}. More recently, bipartite synchronization in a network of scalar nonlinear systems whose vector fields are odd functions has been studied in \cite{zhai2016pinning}.

In this paper, we focus on studying the dynamics of networks of $n$-dimensional nonlinear nodes after performing a suitable transformation of the state variables. The specific transformation depends on the symmetries available at the nodes, rather than the symmetries of the network topology, and on the specific desired synchronization/consensus pattern. We show that studying the dynamics of the network in the new state variables simplifies the stability and convergence analysis yielding a set of sufficient conditions for the onset of synchronization/consensus patterns that can be straightforwardly verified. Finally, using these conditions, we present an intuitive systematic methodology to design distributed control algorithms, which exploit the symmetries at the nodes to achieve some desired synchronization pattern. The effectiveness of the theoretical results are illustrated via a set of representative examples.

\section{Mathematical preliminaries}
\label{sec:intro}
We denote by $I_n$ the $n\times n$ identity matrix and by $O_n$ the $n\times n$ matrix with all zero elements. The orthogonal symmetry group will be denoted by $\mathbb{O}(n)$ (see e.g. \cite{golubitsky2003symmetry}).

\subsection{Networks of interest}
We consider undirected networks of $N>1$ smooth $n$-dimensional dynamical systems 
%
\begin{equation}
\label{eq:network_diffusive_nonlinear}
\dot x_i = f\left(t,x_i\right) + k\, \sum_{j=1}^N a_{ij}\,  \big(g_{ij}\left( x_j\right) - x_i\big),
\end{equation}
with initial conditions $x_{i,0} := x_i(t_0)$, $t_0 \geq 0$, where $x_i\in \mathbb{R}^n$, $i=1,\dots,N$, is the state vector of  node $i$, $f:\R^+\times\R^n\rightarrow\R^n$ describes the intrinsic dynamics all nodes share, $k>0$ is the coupling strength, $a_{ij}\in\{0,1\}$ are the elements of the adjacency matrix, the functions $g_{ij}\left(\cdot\right)$ are the coupling functions that will be designed in this paper to obtain a specific synchronization pattern (as defined in Section \ref{sec:problem_statement}). We assume that well-posedness conditions are satisfied so that a solution of \eqref{eq:network_diffusive_nonlinear} exists for all $t\geq t_0$.

Note that, if in \eqref{eq:network_diffusive_nonlinear} we set  $g_{ij}(x) = x$, $\forall i,j=1,\dots,N$, then \eqref{eq:network_diffusive_nonlinear}  describes a network of diffusively coupled nodes, whose dynamics can be written in compact form as:
\begin{equation}
\label{eq:network_block}
\dot{X}=F(t,X)- k\,\left(L\otimes I_n\right)X,
\end{equation}
where $X:=\left[x_1^T, \;\dots\;, x_N^T\right]^T\in\mathbb{R}^{nN}$,  $F(t,X):=\big[f(t,x_1)^T, \allowbreak \;\dots\;, \allowbreak f(t,x_N)^T \big]^T$, and $L$ is the $N\times N$ Laplacian matrix. 
In the rest of the paper we will refer to networks of the form \eqref{eq:network_block} as {\em auxiliary networks} associated to \eqref{eq:network_diffusive_nonlinear}. Specifically, we will provide conditions for the onset of synchronization patterns for network \eqref{eq:network_diffusive_nonlinear} which correspond to achieving synchronization of network \eqref{eq:network_block}, as defined below.
\begin{definition}
\label{def:synchronization}
Let $\dot{s}=f(t,s)$. We say that \eqref{eq:network_block} achieves synchronization if $\lim_{t \to +\infty} \abs{x_i(t)-s(t)}=0$, $\forall i= 1,\ldots,N$. 
\end{definition}
In the case where nodes’ dynamics are integrators, Definition \ref{def:synchronization}  becomes a definition for consensus.

\subsection{Equivariant dynamics}
The symmetries of a system of ODEs are described in terms of a group of linear transformations of the variables that preserves the structure of the equation and its solutions (see \cite{golubitsky2003symmetry,dionne1996coupled,golubitsky2015recent} for a detailed discussion and proofs of the material reported in this Section). In this paper, we will consider symmetries of ODEs specified in terms of compact Lie groups acting on $\mathbb{R}^n$. These groups can be identified as a subgroup of orthogonal matrices $\mathbb{O}(n)$, i.e. matrices $\gamma$ such that $\gamma^{-1} = \gamma^T$.

Consider a dynamical system of the form
\begin{equation}
\label{eq:ode}
\dot x = f(t,x), \ \ \ x \in\R^n.
\end{equation}
where $f:\R^+\times\R^n\rightarrow\R^n$ is a smooth vector field. 
We will use the following standard definitions \cite{golubitsky2003symmetry}.
\begin{definition}
\label{def:symmetric_solutions}
The group element $\gamma\in\mathbb{O}(n)$ is a \emph{symmetry} of \eqref{eq:ode} if for every solution $x(t)$ of \eqref{eq:ode}, $\gamma x(t)$ is also a solution.
\end{definition}

\begin{definition}
Let $\Gamma$ be a compact Lie group acting on $\mathbb{R}^n$. Then, $f$ is \emph{$\Gamma$-equivariant} if $f(t,\gamma x)=\gamma f(t,x)$ for all $\gamma\in\Gamma,\; x\in\mathbb{R}^n$.
\end{definition}

Essentially, $\Gamma$-equivariance means that the orthogonal matrix $\gamma$ {\em commutes} with $f$ and it implies that $\gamma$ is a symmetry of \eqref{eq:ode}. In fact, let $y(t) = \gamma x(t)$, we have that $\dot y = \gamma \dot x = \gamma f(t,x) = f(t,\gamma x) = f(t,y)$. We now introduce the following Lemma which will be used later in the paper. 
\begin{lemma}
\label{lemma:F_equivariant}
Assume that, for system \eqref{eq:ode}, $f(t,x)$ is $\Gamma$-equivariant. Let 
\begin{equation}
\label{eq:matrix_D}
D :=\mathrm{diag}\{ \sigma_1,\dots,\sigma_N \},
\end{equation}
be a block diagonal matrix with blocks $\sigma_i \in \Gamma$, $i=1,\ldots,N$. Then,  for all $X$, $DF(t,X)=F(t,DX)$. 
\end{lemma}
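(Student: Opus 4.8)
The plan is to reduce the claimed matrix identity to a block-by-block verification, exploiting the fact that $D$ is block diagonal and that both $F$ and $X$ are obtained by stacking $n$-dimensional objects. The only substantive ingredient will be the $\Gamma$-equivariance of $f$, applied separately in each block; because every diagonal block $\sigma_i$ is required to lie in $\Gamma$, that hypothesis is directly applicable at each stage.

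First I would write out the $i$-th block of $DX$. Since $D = \mathrm{diag}\{\sigma_1, \dots, \sigma_N\}$ is block diagonal with $n\times n$ diagonal blocks, and $X = [x_1^T, \dots, x_N^T]^T$, the product $DX$ has $i$-th block equal to $\sigma_i x_i$, so that $DX = [(\sigma_1 x_1)^T, \dots, (\sigma_N x_N)^T]^T$. This is the step that lets the global identity decouple into $N$ independent $n$-dimensional statements.

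Next I would compare the two sides block by block. By the definition of $F$, the $i$-th block of $F(t,DX)$ is $f(t,\sigma_i x_i)$. Invoking the hypothesis that $f$ is $\Gamma$-equivariant, together with $\sigma_i \in \Gamma$, gives $f(t,\sigma_i x_i) = \sigma_i f(t,x_i)$ for each $i$. On the other side, since $F(t,X) = [f(t,x_1)^T, \dots, f(t,x_N)^T]^T$ and $D$ is block diagonal, the $i$-th block of $DF(t,X)$ is exactly $\sigma_i f(t,x_i)$. As the two stacked vectors agree in every block $i = 1,\ldots,N$, they coincide, which is precisely the desired identity $DF(t,X) = F(t,DX)$.

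There is essentially no obstacle here: the result is a bookkeeping consequence of the block-diagonal structure of $D$ and the stacked structure of $F$. The one point I would state carefully is that $\Gamma$-equivariance must be invoked once per block, which is legitimate precisely because each $\sigma_i$ is assumed to belong to $\Gamma$; were the $\sigma_i$ allowed to be arbitrary elements of $\mathcal{O}(n)$, the conclusion would generally fail.
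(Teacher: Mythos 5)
Your proof is correct and follows essentially the same route as the paper's: both arguments exploit the block-diagonal structure of $D$ and the stacked structure of $F$ to reduce the identity to a block-wise application of $\Gamma$-equivariance, $f(t,\sigma_i x_i) = \sigma_i f(t,x_i)$. The only cosmetic difference is that the paper writes out the full matrix product explicitly, while you organize the same computation as a block-by-block comparison.
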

\begin{proof}
The proof can be immediately obtained from \cite{dionne1996coupled} and is omitted here for the sake of brevity.
%
\end{proof}
Lemma \ref{lemma:F_equivariant} implies that whenever a function $f(t,x)$ is $\Gamma$-equivariant, then the stack $F$ commutes with $D$.  
Recently,  symmetries of dynamical systems have been investigated in several application fields, ranging from chaos and bifurcation theory  to synchronization \cite{golubitsky2003symmetry}. It has been suggested that the interplay between symmetries and dynamics plays a major role  in opinion formation models and biological applications \cite{golubitsky2015recent,Rus_Slo_12,sontag2017dynamic}. A possible justification for this might be that system behaviors induced by symmetries are {\em rigid} i.e. are robust with respect to certain perturbations of the vector field, \cite{golubitsky2015recent,josic2006network}.
\section{Bipartite synchronization}
\label{sec:bipartite_synch}
\subsection{Problem Statement}
\label{sec:problem_statement}
Let $\mathcal{G}_N := \left\{1, \ldots, N \right\}$ be the set of all network nodes and let $\mathcal{G}$ and $\mathcal{G}^\ast$ be two subsets (or \emph{cluster}) such that:
$\mathcal{G} \cap \mathcal{G}^\ast = \left\{ \emptyset \right\}$, $\mathcal{G} \cup \mathcal{G}^\ast = \mathcal{G}_N$, with the cardinality of $\mathcal{G}$ being equal to $\ell$ and the cardinality of $\mathcal{G}^\ast$ being $N-\ell$. Clearly, the two sets above generate a partition of the network nodes. Throughout this paper, no hypotheses will be made on the network partition, i.e. nodes can be partitioned arbitrarily, furthermore nodes belonging to the same cluster do not necessarily need to be directly interconnected.
\begin{definition}
\label{def:bipartite_sync}
Consider network \eqref{eq:network_diffusive_nonlinear} and let $s(t)=\gamma\, s^\ast(t)$, with $\gamma\in\mathbb{O}(n)$. We say that \eqref{eq:network_diffusive_nonlinear} achieves a \emph{$\gamma$-bipartite synchronization pattern} if: (i) $\lim_{t\to +\infty} |x_i(t)-s(t)|=0$, $\forall i\in \mathcal{G}$; and (ii) $\lim_{t\to +\infty} |x_i(t)-s^\ast(t)|=0$, $\forall i\in \mathcal{G}^\ast$.
\end{definition}

Definition \ref{def:bipartite_sync} implies that the collective behavior emerging from the network dynamics will encompass two clusters of nodes synchronized onto two different common solutions related via the symmetry $\gamma$. Note that this is a more general definition than that presented in \cite{altafini2013consensus} where the scalar asymptotic solutions considered therein agree in modulus but differ in sign. In our case the two solutions $s$ and $s^\ast$ still share the same norm but are related by the more generic symmetry transformation $\gamma$.


\subsection{Main Result}

The following result provides a sufficient condition for network \eqref{eq:network_diffusive_nonlinear} to achieve a $\gamma$-bipartite synchronization pattern.
%
\begin{theorem}
\label{thm:pattern_synch}
Network  \eqref{eq:network_diffusive_nonlinear} achieves a $\gamma$-bipartite synchronization pattern if:
\begin{description}
\item [{\bf H1}] the intrinsic node dynamics $f$ is $\gamma$-equivariant, with $\gamma\in\mathbb{O}(n)$;
\item [{\bf H2}] $g_{ij}$ is chosen as follows:
$$
g_{ij}\left(x_j\right) := 
\begin{cases}
x_j, & i,j \in \mathcal{G}  \mbox{\, or\, } i,j \in \mathcal{G^\ast}\\
\gamma\, x_j, & i \in  \mathcal{G} \mbox{\, and \, }  j \in \mathcal{G}^\ast\\
\gamma^{T} x_j,  & i \in  \mathcal{G}^\ast \mbox{\, and \,} j \in \mathcal{G}
\end{cases}
$$
\item [{\bf H3}] the associated auxiliary network \eqref{eq:network_block} synchronizes.
\end{description}
\end{theorem}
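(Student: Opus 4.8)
The plan is to construct a linear change of coordinates that maps the original network \eqref{eq:network_diffusive_nonlinear} onto its auxiliary network \eqref{eq:network_block}, so that synchronization of the latter (hypothesis {\bf H3}) can be transported back into a bipartite pattern for the former. Concretely, I would introduce the block-diagonal matrix $D=\mathrm{diag}\{\sigma_1,\dots,\sigma_N\}$ of the form \eqref{eq:matrix_D}, choosing $\sigma_i=I_n$ for every $i\in\mathcal{G}$ and $\sigma_i=\gamma$ for every $i\in\mathcal{G}^\ast$, and set $Y:=DX$, i.e. $y_i=\sigma_i x_i$. Since $I_n$ and $\gamma$ both belong to the group generated by $\gamma$, Lemma \ref{lemma:F_equivariant} applies and yields $DF(t,X)=F(t,DX)$, which disposes of the intrinsic-dynamics term.

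First I would differentiate $y_i=\sigma_i x_i$ along the trajectories of \eqref{eq:network_diffusive_nonlinear}, obtaining
\begin{equation}
\dot y_i = \sigma_i f(t,x_i) + k \sum_{j=1}^N a_{ij}\, \sigma_i g_{ij}(x_j) - k \sum_{j=1}^N a_{ij}\, \sigma_i x_i .
\end{equation}
The first term equals $f(t,\sigma_i x_i)=f(t,y_i)$ by $\gamma$-equivariance, and the last term is trivially $-k\sum_j a_{ij}\, y_i$. The crux is the middle coupling term, where I would verify, case by case according to {\bf H2}, that $\sigma_i g_{ij}(x_j)=y_j$ for every edge $(i,j)$. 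The three cases are: both nodes in the same group, so $\sigma_i g_{ij}(x_j)=\sigma_i x_j=\sigma_j x_j=y_j$ since $\sigma_i=\sigma_j$; $i\in\mathcal{G},\,j\in\mathcal{G}^\ast$, so $\sigma_i g_{ij}(x_j)=I_n\gamma x_j=\gamma x_j=y_j$; and $i\in\mathcal{G}^\ast,\,j\in\mathcal{G}$, so $\sigma_i g_{ij}(x_j)=\gamma\gamma^T x_j=x_j=y_j$, using $\gamma\gamma^T=I_n$. In every case the coupling collapses onto $y_j$, so $Y$ satisfies exactly $\dot Y=F(t,Y)-k(L\otimes I_n)Y$, i.e. the auxiliary network.

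By {\bf H3} the auxiliary network synchronizes, hence $y_i(t)\to s(t)$ for all $i$ with $\dot s=f(t,s)$. Transforming back through $x_i=\sigma_i^T y_i$ then gives $x_i(t)\to s(t)$ for $i\in\mathcal{G}$ and $x_i(t)\to\gamma^T s(t)=:s^\ast(t)$ for $i\in\mathcal{G}^\ast$, with $s=\gamma s^\ast$, which is precisely the $\gamma$-bipartite pattern of Definition \ref{def:bipartite_sync}.

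I expect the only genuine obstacle to be the bookkeeping in the coupling term: one must confirm that the specific $\gamma$ and $\gamma^T$ factors prescribed by {\bf H2} are exactly what is needed to absorb the left-multiplication by $\sigma_i$ and recover $y_j$, and in particular that the orthogonality $\gamma^{-1}=\gamma^T$ is what closes the $i\in\mathcal{G}^\ast,\,j\in\mathcal{G}$ case. Everything else is a direct substitution, so the argument is essentially a well-chosen change of variables rather than a stability estimate.
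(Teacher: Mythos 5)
Your proof is correct and takes essentially the same approach as the paper: the same block-diagonal transformation $D=\mathrm{diag}\{\sigma_1,\dots,\sigma_N\}$ with $\sigma_i=I_n$ on $\mathcal{G}$ and $\sigma_i=\gamma$ on $\mathcal{G}^\ast$, reduction to the auxiliary network via $\gamma$-equivariance (Lemma \ref{lemma:F_equivariant}) and orthogonality of $\gamma$, then the back-transformation $x_i=\sigma_i^T y_i$ to read off the bipartite pattern. The only difference is presentational: you verify the coupling term node-by-node and case-by-case, while the paper writes the closed-loop dynamics compactly as $\dot X=F(t,X)-k\,D^T(L\otimes I_n)DX$ and conjugates by $D$, which amounts to the same bookkeeping.
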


\begin{proof}
Without loss of generality, let us consider the first $\ell$ nodes belonging to the subset $\mathcal{G}$, that is $\mathcal{G}=\{1,\dots,\ell\}$, and the remaining nodes to $\mathcal{G}^\ast$, that is $\mathcal{G}^\ast=\{\ell+1,\dots,N\}$. Hypothesis {\bf H2} implies that the  dynamics of network \eqref{eq:network_diffusive_nonlinear} can be written as follows.

%
%
\begin{eqnarray*}
\dot{x}_i= f(t,x_i)-k\Bigg[ l_{ii}x_i + \sum_{\substack{j=1\\ j\neq i}}^\ell l_{ij}x_j + \sum_{j=\ell+1}^N l_{ij}\gamma\,x_j \Bigg], \\ \quad \mbox{if }i\in\mathcal{G}; \\
\dot{x}_i= f(t,x_i)-k\Bigg[ l_{ii}x_i + \sum_{j=1}^\ell l_{ij}\gamma^T x_j + \sum_{\substack{j=\ell+1\\ j\neq i}}^N l_{ij}x_j \Bigg], \\ \quad \mbox{if }i\in\mathcal{G}^\ast,
\end{eqnarray*}
where $l_{ij}$ are the elements of the Laplacian matrix. Now, let $D$ be the $nN\times nN$ block-diagonal matrix having on its main block-diagonal
\begin{equation}
\label{eq:sigma_bi}
\sigma_i=
\begin{cases}
I_n & \mbox{if node $i$ belongs to } \mathcal{G}\\
\gamma & \mbox{if node $i$ belongs to } \mathcal{G}^\ast\\
\end{cases}
\end{equation}
Then the above dynamics can be rewritten in compact form as  $\dot{X}=F(t,X)-k\, D^{T}(L\otimes I_n)DX$ (recall that $D^T = D^{-1}$). Now, let $Z = DX$, we have:
\begin{equation*}
\begin{split}
\dot Z = & DF(t,X)-k\,DD^T(L\otimes I_n)DX=\\
=& F(t,DX)-k\,(L\otimes I_n)DX,
\end{split}
\end{equation*}
where we used {\bf H1} and Lemma \ref{lemma:F_equivariant}. Therefore, in the new state variables $Z$, the network dynamics can be recast as
\begin{equation}
\label{eq:network_Z}
\dot{Z}=F(t,Z)-k(L\otimes I_n)Z,
\end{equation}
that  has the same form as the auxiliary network \eqref{eq:network_block}. Now, from hypothesis {\bf H3}, since the auxiliary network synchronizes, then so does network \eqref{eq:network_Z} which shares the same network dynamics. Therefore, there exists some $\dot s = f(t,s)$ such that $\lim_{t\rightarrow + \infty} \abs{z_i(t)-s(t)} = 0, \forall i = 1,\ldots, N$.

Now, $X=D^{T}Z$ yields
\begin{equation*}
\begin{split}
\lim_{t\to +\infty} x_i(t) 
=
\begin{cases}
I_n\, z_i(t)=s(t), & \mbox{if } i\in \mathcal{G};\\
\gamma^{T} z_i(t)=\gamma^{T} s(t), & \mbox{if } i\in \mathcal{G}^\ast.
\end{cases}
\end{split}
\end{equation*}
Finally, from Definition \ref{def:bipartite_sync} we know that $s(t) = \gamma s^\ast(t)$, thus $\gamma^Ts(t) = \gamma^T\gamma s^\ast (t)= s^\ast (t)$, since $\gamma^T = \gamma^{-1}$ from {\bf H1}.
\end{proof}
%
%
\begin{remark}
In the proof of Theorem \ref{thm:pattern_synch}, we used a transformation matrix $D$ which is a generalization of the one used in \cite{altafini2013consensus}, where only the set of \emph{gauge transformations} $D=\mathrm{diag}\{\pm 1,\dots,\pm 1 \}$ was considered for a network of (scalar) integrators. Indeed, when $n=1$ the orthogonal group, and therefore every possible $\gamma$, is $\mathbb{O}(1)=\{1,-1\}$.
\end{remark}
%
Note that Theorem \ref{thm:pattern_synch} reduces the problem of proving convergence to a $\gamma$-bipartite synchronization pattern  to that of ensuring synchronization of an auxiliary network which is diffusively coupled. In general, this latter problem is much simpler to solve than the former (see e.g. \cite{Rus_diB_09b,deL_diB_Rus_11,Pec_Car_98}).
%
%
%
%
Also, notice that our result is somewhat complementary to the one given in \cite{pogromsky200265} where the stability is analyzed of synchronization patterns arising from symmetries of the network structure.  Here, instead, we use symmetries in the nodes' dynamics to induce synchronization patterns in the network.
%


%
%

\section{Application to linear systems}
\label{sec:consensus}
%

Consider a set of $N>1$ LTI agents described by
\begin{equation}
\label{eq:lti}
\dot{x}_i=Ax_i+Bu_i
\end{equation}
where $i=1,\dots,N$, $x_i\in\mathbb{R}^n$, $A\in\mathbb{R}^{n\times n}$, $B\in\mathbb{R}^{n\times m}$, and assume they are networked through the interconnection protocol $u_i\in\mathbb{R}^m$ given by
\begin{equation}
\label{eq:lti_u}
u_i=K\, \sum_{j=1}^N a_{ij} (g_{ij}(x_j)-x_i)
\end{equation}
where $K\in\mathbb{R}^{m\times n}$ is the control gain matrix, $a_{ij}\in\{0,1\}$ and $g_{ij}$ is the coupling function defined as before. Substituting \eqref{eq:lti_u} into \eqref{eq:lti}, we obtain
\begin{equation}
\label{eq:lti_cl}
\dot{x}_i=Ax_i+BK\, \sum_{j=1}^N a_{ij} (g_{ij}(x_j)-x_i)
\end{equation}
for $i=1,\dots,N$. As noted in Section \ref{sec:bipartite_synch}, if we select the coupling functions as $g_{ij}(x)=x$, then we obtain a diffusively coupled network that can be written in compact form as \cite{zhang2016bipartite} 
\begin{equation}
\label{eq:lti_aux}
\dot{X}=(I_N\otimes A)X-(L\otimes BK)X
\end{equation}
where $L$ is the Laplacian matrix. Again, we will refer to network \eqref{eq:lti_aux} as the \emph{auxiliary network} associated to \eqref{eq:lti_cl}.

\begin{corollary}
\label{thm:lti}
A $\gamma$-bipartite consensus pattern arises for network \eqref{eq:lti_cl}  if Theorem \ref{thm:pattern_synch} holds for some  $\gamma \in \mathbb{O}(n)$ such that $A\,\gamma=\gamma\, A$.
\end{corollary}
\begin{proof}
The proof follows the same steps as in Theorem \ref{thm:pattern_synch}. In particular, using hypothesis \textbf{H2}, we can rewrite  \eqref{eq:lti_cl} as
\begin{equation*}
\dot{X}=(I_N\otimes A)X-D^T(L\otimes BK)DX
\end{equation*}
where $D$ is defined as in \eqref{eq:matrix_D} and \eqref{eq:sigma_bi}. Now, note that when $f(x)=Ax$, $f$ being $\gamma$-equivariant (\textbf{H1}) means that the matrices $A$ and $\gamma$ commute. 
%
Moreover, note that $D(I_N\otimes A)=(I_N\otimes A)D$, since $D$ and $(I_N\otimes A)$ are block diagonal matrices whose respective diagonal blocks commute with each other. Therefore, taking $Z=DX$ we obtain
\begin{equation*}
\dot Z  = (I_N\otimes A)Z-(L\otimes BK)Z,
\end{equation*}
that has the same form of the auxiliary network \eqref{eq:lti_aux}. From \textbf{H3}, this latter network achieves consensus and therefore, as in the proof of Theorem \ref{thm:pattern_synch}, we can conclude that network \eqref{eq:lti_cl} achieves $\gamma$-bipartite consensus.
\end{proof}
\begin{remark}
In \cite{zhang2016bipartite} the authors studied bipartite consensus in diffusive networks,
which, in the context of our results, corresponds to the case where $\gamma=-I_n$. Note that the matrix $-I_n$ commutes with every square matrix $A$ and therefore the result of \cite{zhang2016bipartite} is a special case of Corollary \ref{thm:lti}.
\end{remark}

As a specific application, we consider a connected undirected network of $N>1$ $n$-dimensional integrators $y_i^{(n)}=u_i$, where $y_i$ is a scalar and $y_i^{(h)}$ denotes $d^h y_i/dt^h$. The model can be written in compact form as \eqref{eq:lti_cl}, where $x_i:=[y_i,\,\dot{y}_i,\dots,\, y_i^{(n-1)} ]^T$,
\begin{equation}
\label{eq:integrators_AB}
A=
\begin{bmatrix}
0 & 1 & 0 & \dots & 0 & 0\\
0 & 0 & 1 & \dots & 0 & 0\\
\vdots & \vdots & \vdots & \ddots & \vdots & \vdots\\
0 & 0 & 0 & \dots & 1 & 0\\
0 & 0 & 0 & \dots & 0 & 1\\
0 & 0 & 0 & \dots & 0 & 0\\
\end{bmatrix};
\quad
B=
\begin{bmatrix}
0\\
0\\
\vdots\\
0\\
0\\
1
\end{bmatrix}
\end{equation}
and $K\in\mathbb{R}^{1\times n}$ is the control gain matrix.
Given the special structure of $A$, Corollary \ref{thm:lti} implies that the only possible $\gamma$-bipartite consensus pattern for \eqref{eq:integrators_AB} is with $\gamma=-I_n$.

\section{Multipartite synchronization}
\label{sec:multipartite}
We present a generalization of the results of Section \ref{sec:bipartite_synch} to the case of ODEs having more than one symmetry. Let $\mathcal{G}_N := \left\{1, \ldots, N \right\}$ be the set of nodes and let $\mathcal{G}_1, \dots, \mathcal{G}_r$ be $r\geq 2$ non-empty subsets forming a partition for $\mathcal{G}_N$, that is $\mathcal{G}_i\, \cap\, \mathcal{G}_j = \left\{ \emptyset \right\}$, for all $i,j$, with $i\neq j$, and $\bigcup_{i=1}^r\,\mathcal{G}_i = \mathcal{G}_N$.

\begin{definition}
Consider network \eqref{eq:network_diffusive_nonlinear}  and let $\{\gamma_1,\dots,\gamma_r\}\in\mathbb{O}(n)$. We say that \eqref{eq:network_diffusive_nonlinear} achieves a \emph{$\Gamma$-multipartite synchronization pattern} if
\begin{equation*}
\begin{array}{*{20}l}
\lim_{t\to +\infty} |x_i(t)-s_1(t)|=0, & \quad \forall i\in \mathcal{G}_1;\\
\qquad \quad \vdots  \\
\lim_{t\to +\infty} |x_i(t)-s_r(t)|=0, & \quad \forall i\in \mathcal{G}_r,
\end{array}
\end{equation*}
where
\begin{equation*}
\begin{array}{*{20}l}
s_1(t)=\gamma_1\, s_1(t)=I_n\, s_1(t)\\
\qquad \quad \vdots\\
s_1(t)=\gamma_r\, s_r(t).
\end{array}
\end{equation*}
%
\end{definition}
\vspace{0.2cm}

\begin{theorem}
\label{thm:multipartite}
Network \eqref{eq:network_diffusive_nonlinear} achieves a $\Gamma$-multipartite synchronization pattern if:
\begin{description}
\item [{\bf H1}]  the intrinsic node dynamics $f$ is $\Gamma$-equivariant, and there exist $r$ symmetries $\{\gamma_1,\dots, \gamma_r\}\in\Gamma$;
\item [{\bf H2}] $g_{ij}$ is defined as:
(i) $g_{ij}(x_j) = \gamma_h^T\gamma_k x_j$ if $i \in \mathcal{G}_h$ and $j \in \mathcal{G}_k$; (ii) $g_{ij}(x_j)=x_j$ if $i$ and $j$ belong to the same cluster;
\item [{\bf H3}] the associated auxiliary network \eqref{eq:network_block} synchronizes.
\end{description}
\end{theorem}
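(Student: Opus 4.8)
The plan is to follow the same line of reasoning as the proof of Theorem \ref{thm:pattern_synch}, generalizing the block-diagonal transformation $D$ to the multipartite setting. Without loss of generality I would order the nodes so that the groups $\mathcal{G}_1,\dots,\mathcal{G}_r$ occupy consecutive blocks of indices, and define the $nN\times nN$ block-diagonal matrix $D=\mathrm{diag}\{\sigma_1,\dots,\sigma_N\}$ with $\sigma_i=\gamma_h$ whenever node $i\in\mathcal{G}_h$. Since $\gamma_1=I_n$ by the convention $s_1=\gamma_1 s_1$, this construction reduces exactly to \eqref{eq:sigma_bi} when $r=2$. By construction each $\sigma_i\in\Gamma\subseteq\mathcal{O}(n)$, so $D\in\mathcal{O}(nN)$ and $D^T=D^{-1}$.

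The key step is to show that hypothesis {\bf H2} lets the coupling term factor as $D^T(L\otimes I_n)D$, exactly as in \eqref{eq:network_X}. For a node $i\in\mathcal{G}_h$ the interconnection term is $k\sum_j a_{ij}\,(g_{ij}(x_j)-x_i)$, and writing $g_{ij}(x_j)=\gamma_h^T\gamma_k\, x_j=\sigma_i^T\sigma_j x_j$ for $j\in\mathcal{G}_k$ gives $k\sum_{j\neq i} a_{ij}\sigma_i^T\sigma_j x_j - k\,l_{ii}x_i$, where $l_{ii}=\sum_j a_{ij}$ is the degree of node $i$. The crucial observation is that orthogonality yields $\sigma_i^T\sigma_i=I_n$, so the diagonal degree term can be rewritten as $l_{ii}\sigma_i^T\sigma_i x_i$; together with $a_{ij}=-l_{ij}$ for $i\neq j$, the entire coupling collapses to $-k\,\sigma_i^T\sum_j l_{ij}\sigma_j x_j$, which is precisely the $i$-th block row of $-kD^T(L\otimes I_n)DX$. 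Hence the network reads
\begin{equation*}
\dot X = F(t,X)-k\,D^T(L\otimes I_n)DX,
\end{equation*}
identical in form to \eqref{eq:network_X}.

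From here the argument is essentially verbatim that of Theorem \ref{thm:pattern_synch}. Setting $Z=DX$ and using $DF(t,X)=F(t,DX)$ (Lemma \ref{lemma:F_equivariant}, applicable since every $\sigma_i\in\Gamma$) together with $DD^T=I_{nN}$ yields $\dot Z = F(t,Z)-k(L\otimes I_n)Z$, which is the auxiliary-network form \eqref{eq:network_Z}. By {\bf H3} this synchronizes, so there is a solution $\dot s=f(t,s)$ with $\lim_{t\to+\infty}\abs{z_i(t)-s(t)}=0$ for all $i$. Transforming back through $X=D^TZ$ gives, for $i\in\mathcal{G}_h$, $x_i=\sigma_i^T z_i\to\gamma_h^T s(t)$. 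Identifying $s_1(t):=s(t)$ and $s_h(t):=\gamma_h^T s(t)$ recovers the required relations $s_1=\gamma_h s_h$ (since $\gamma_h\gamma_h^T=I_n$), which is exactly the claimed $\Gamma$-multipartite pattern.

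I expect the only genuine subtlety, as opposed to routine book-keeping, to be verifying that the product form $\gamma_h^T\gamma_k$ in {\bf H2} is precisely what renders the coupling similarity-equivalent to the diffusive Laplacian, and in particular that the self-degree term is correctly absorbed via $\sigma_i^T\sigma_i=I_n$. One should also check that $\{\gamma_1,\dots,\gamma_r\}$ being drawn from the group $\Gamma$ guarantees each block $\sigma_i$ lies in $\Gamma$, which is what licenses the use of Lemma \ref{lemma:F_equivariant}; this is immediate, but it is the one place where the group structure of $\Gamma$ is genuinely used.
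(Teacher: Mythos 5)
Your proposal is correct and follows essentially the same route as the paper's own proof: the same block-diagonal matrix $D$ with $\sigma_i=\gamma_h$ for $i\in\mathcal{G}_h$, the same change of variables $Z=DX$ invoking Lemma \ref{lemma:F_equivariant} and hypothesis {\bf H3}, and the same back-transformation $X=D^TZ$ yielding $s_h=\gamma_h^T s_1$. In fact, your explicit verification that the coupling under {\bf H2} collapses to $-k\,D^T(L\otimes I_n)DX$ (absorbing the degree term via $\sigma_i^T\sigma_i=I_n$ and $a_{ij}=-l_{ij}$) is spelled out in more detail than in the paper, which simply asserts the compact form.
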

%
%
%
%
\begin{proof}
Without loss of generality, relabel the network nodes such that the first $\ell_1$ nodes belong to $\mathcal{G}_1$, i.e. $\mathcal{G}_1=\{1,\dots, \ell_1\}$, then the other $\ell_2-\ell_1$ nodes belong to $\mathcal{G}_2$, i.e. $\mathcal{G}_2=\{\ell_1+1,\dots, \ell_2\}$, and so on until $\mathcal{G}_{r}=\{\ell_{r-1}+1,\dots,\ell_r\}$, with $\ell_r=N$. From hypothesis {\bf H2} the network dynamics \eqref{eq:network_diffusive_nonlinear} can then be written as
%
%
%
\begin{equation*}
\begin{split}
\dot{x}_i & =  f(t,x_i)  -k\Bigg[ l_{ii}x_i + \sum_{\substack{j=1\\ j\neq i}}^{\ell_1} l_{ij}\, \gamma_h^T \gamma_1 \,x_j\\
& + \sum_{\substack{j=\ell_1+1\\ j\neq i}}^{\ell_2} l_{ij}\,\gamma_h^T\gamma_2\,x_j + \dots  
 + \sum_{\substack{j=\ell_{r-1}+1\\ j\neq i}}^{\ell_r} l_{ij}\,\gamma_h^T\gamma_r\,x_j \Bigg], 
\end{split}
\end{equation*}
for any $i\in\mathcal{G}_h$ and $h\in\{1,\dots,r\}$,
where $l_{ij}$ are the elements of the Laplacian matrix. Now, let $D$ be the $nN\times nN$ block-diagonal matrix defined in \eqref{eq:matrix_D} having on its main diagonal the blocks $\sigma_i= \gamma_h$ if node $i$ belongs to $\mathcal{G}_h$, with $h\in\{1,\dots,r\}$. Then, the above dynamics can be rewritten as (recall that $D^T = D^{-1}$):
\begin{equation}
\label{eq:network_X_multi}
\dot{X}=F(t,X)-k\,D^{T}(L\otimes I_n)DX.
\end{equation}
Now, let $Z = DX$, we have:
\begin{equation*}
\begin{split}
\dot Z =&  DF(t,X)-k\,DD^T(L\otimes I_n)DX=\\
=& F(t,Z)-k\,(L\otimes I_n)Z,
\end{split}
\end{equation*}
where we used {\bf H1} and Lemma \ref{lemma:F_equivariant}. Now, note that in the new state variables the network dynamics can recast as
\begin{equation}
\label{eq:network_Z_multi}
\dot{Z}=F(t,Z)-k\,(L\otimes I_n)Z,
\end{equation}
that has the same form as the auxiliary network \eqref{eq:network_block}.

Since by hypothesis {\bf H3}, the auxiliary network synchronizes, then also does network \eqref{eq:network_Z_multi}. Therefore, there exists some $\dot s_1 = f(t,s_1)$ such that, $\forall i = 1,\ldots, N$: $
\lim_{t\rightarrow + \infty} \abs{z_i(t)-s_1(t)} = 0$, $\forall i$. Since $X=D^{T}Z$, we finally have that
\begin{equation*}
\begin{split}
\lim_{t\to +\infty} x_i(t) & = 
\begin{cases}
\gamma_1^T\, s_1(t)= I_n\, s_1(t)=s_1(t), & \mbox{if } i\in \mathcal{G}_1;\\
\gamma_2^{T} s_1(t)=s_2(t), & \mbox{if } i\in \mathcal{G}_2;\\
\qquad \vdots\\
\gamma_r^{T} s_1(t)=s_r(t), & \mbox{if } i\in \mathcal{G}_r.
\end{cases}
\end{split}
\end{equation*}
\end{proof}
\section{A design methodology}
\label{sec:design}
%
We  show how the results of this paper can be used to design distributed control strategies ensuring that a generic network of interest attains a desired synchronization/consensus pattern. The methodology considers a {\em local nonlinear controller}, $v_i(x_i)$, at the node level inducing a symmetry in its closed-loop vector field and a {\em communication protocol} that exploits this symmetry to attain the desired synchronization pattern.  The resulting closed loop network dynamics takes the form
\begin{equation*}
\label{eq:network_algorithm}
\dot x_i = f\left(t,x_i\right) + v_i(x_i)+ k\, \sum_{j=1}^N a_{ij}\, \left(g_{ij}\left( x_j\right) - x_i\right).
\end{equation*}
%
The control task in this case is to ensure that a desired $\Gamma$-multipartite pattern is achieved by the network. Typical target patterns can include
\begin{itemize}
\item
\emph{anti-synchronization}, where nodes belonging to different clusters synchronize onto two different trajectories, say $s_1(t)$ and $s_2(t)$, with $s_2(t) = - s_1(t)$. In this case, the synchronous evolution of the two clusters are related via the symmetry  $\gamma=-I_n$, i.e. $s_1(t) = \gamma s_2(t)$;
\item
\emph{partial anti-synchronization}, where nodes belonging to different clusters have a subset of state variables which are synchronized and a subset of state variables which are anti-synchronized. In this case, the synchronous evolution of the different clusters are related via the symmetry $\gamma=\mathrm{diag}\{\pm 1,\dots, \pm 1 \}$;
\item
\emph{phase-shift synchronization}, also called discrete rotating wave, where nodes belonging to different clusters synchronize on the same $T$-periodic solution $s(t)$ but with a different phase-shift $\theta\in[0,1)$. In this case, $\gamma x_i(t)=x_j(t+\theta T)$.
\end{itemize}
%

Our procedure to ensure that a desired $\Gamma$-multipartite pattern is achieved consists of the following steps:
\begin{enumerate}
\item
Choose the symmetry group $\Gamma=\{\gamma_1,\dots,\gamma_r\}$ associated to the desired synchronization pattern that the network should achieve;
\item Determine the desired partition $\mathcal{G}_1,\ldots,\mathcal{G}_r$ identifying nodes in each clusters that should exhibit a different synchronous solution;
\item For all the nodes belonging to the cluster $\mathcal{G}_i$, check if $f(t,x)$ is $\gamma_i$-equivariant, with $\gamma_i\in\Gamma$. If this condition is verified, then set $v_i(x)=0$. Otherwise, design the local nonlinear control input such that the closed-loop vector field
$\widehat{f}_i(t,x):=f(t,x)+v_i(x)$ is $\gamma_i$-equivariant. \textcolor{black}{For example, if the desired pattern is anti-synchronization, then the local controller $v_i(x)$ has to be designed such that the vector field $\widehat{f}_i(t,x)$ is an odd function (see Section \ref{sec:antiynch}). Analogously, in the case where phase-shift is the desired pattern, then $v_i(x)$ has to be designed such that $\widehat{f}_i(t,x)$ commutes with a matrix belonging to the group $\mathbb{SO}_2$ (see Section \ref{sec:example_multi})};
\item Design the communication protocols in accordance to {\bf H2} of Theorem~\ref{thm:multipartite} \textcolor{black}{and set the coupling gain $k$ so that the corresponding auxiliary network \eqref{eq:network_block} synchronizes, in accordance to {\bf H3}.}

\end{enumerate}

\section{Examples}
\subsection{Anti-synchronization of FitzHugh-Nagumo oscillators}\label{sec:antiynch}
We address the problem of generating an anti-synchronization pattern for a network of FitzHugh-Nagumo (FN) oscillators \cite{Fit_55}.
The network is described by
\begin{equation}\label{eqn:FN}
\begin{array}{*{20}l}
\dot v_i = c\left(v_i+w_i -\frac{1}{3}v_i^3\right) + k\sum_{j=1}^N a_{ij}\,(v_j-v_i),\\
\dot w_i = -\frac{1}{c}\left(v_i+bw_i\right)+ k\sum_{j=1}^N a_{ij}\,(w_j-w_i),\\
\end{array}
\end{equation}
where $v_i$ and $w_i$ are the membrane potential and the recovery variable for the $i$-th FN oscillator ($i=1,\ldots,N$).
In terms of the formalism introduced in Definition \ref{def:bipartite_sync}, anti-synchronization will correspond to the case where $s(t) = - s^\ast(t)$ so that $\gamma = -I_2$.
Now, let $x=[v,w]^T$. Then, $f(t,x)=\left[c\left(v+w -\frac{1}{3}v^3 \right), -\frac{1}{c}(v+bw)\right]^T$ fulfills hypothesis {\bf H1} of Theorem \ref{thm:pattern_synch}. 
%
Consider now the network structure shown in Figure \ref{fig:net_topologies} (left panel) and its partition (right panel), obtained by dividing nodes into the two clusters $\mathcal{G}=\{1,3\}$ and $\mathcal{G}^\ast=\{2,4,5\}$. The nodes' dynamics can then be written according to \eqref{eq:network_diffusive_nonlinear} as
\begin{equation*}
\label{eqn:FN_network}
\begin{array}{*{20}l}
\dot x_1 = f\left(x_1\right)+k \left(g_{12}\left(x_2\right)+g_{13}\left(x_3\right)-2x_1\right)\\
\dot x_2 = f\left(x_2\right)+k \left(g_{21}\left(x_1\right)-x_2\right)\\
\dot x_3 = f\left(x_3\right)+k \left(g_{31}\left(x_1\right)+g_{34}\left(x_4\right)+g_{35}\left(x_5\right)-3x_3\right)\\
\dot x_4 = f\left(x_4\right)+k \left(g_{43}\left(x_3\right)-x_4\right)\\
\dot x_5 = f\left(x_5\right)+k \left(g_{53}\left(x_3\right)-x_5\right)
\end{array}
\end{equation*}

It is well know from the literature that the auxiliary network \eqref{eq:network_block} associated to the above dynamics synchronizes if $k$ is sufficiently large \cite{strogatz2001exploring}. Therefore, by choosing the coupling gain $k$ sufficiently high, hypothesis {\bf H3} of Theorem \ref{thm:pattern_synch} will also be fulfilled. Finally, {\bf H2} of Theorem \ref{thm:pattern_synch} is fulfilled by  choosing the coupling functions $g_{ij}$ as: (i) $g_{ij}\left(x_j\right)= x_j$, if $i, j \in \mathcal{G}$, or if $i, j \in \mathcal{G^\ast}$; (ii) $g_{ij}\left(x_j\right)= -x_j$, if $i \in  \mathcal{G}$ and $j \in \mathcal{G}^\ast$ or if  $i \in  \mathcal{G}^\ast$ and $j \in \mathcal{G}$. With this choice of the functions $g_{ij}$'s, in accordance to Theorem \ref{thm:pattern_synch}, anti-synchronization is attained with nodes $1$ and $3$ converging onto the same trajectory $s(t)$, while  nodes $2$, $4$ and $5$  onto $s^\ast=-s(t)$ (see Figure \ref{fig:pattern_FN}).
\begin{figure}[t]
\centering
  {\includegraphics[width=3.5cm]{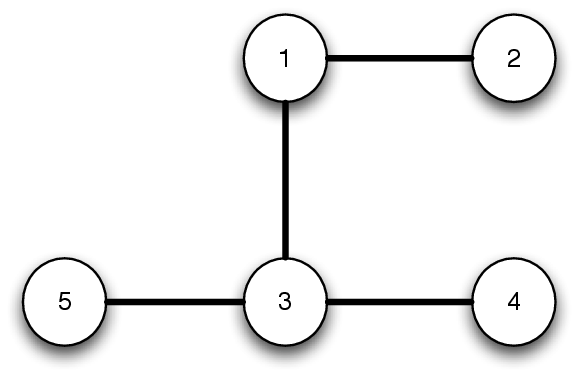}
  \includegraphics[width=3.5cm]{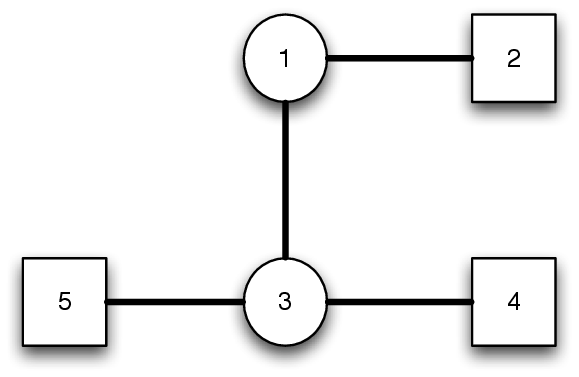}}
  \caption{Left panel: a network of diffusively coupled identical oscillators. Right panel: an arbitrary network partition. Nodes 1 and 3 belong to cluster $\mathcal{G}$, nodes 2, 4 and 5 belong to cluster $\mathcal{G}^\ast$.}
  \label{fig:net_topologies}  
\end{figure}
\begin{figure}[t]
\begin{center}
  \includegraphics[width=7.5cm]{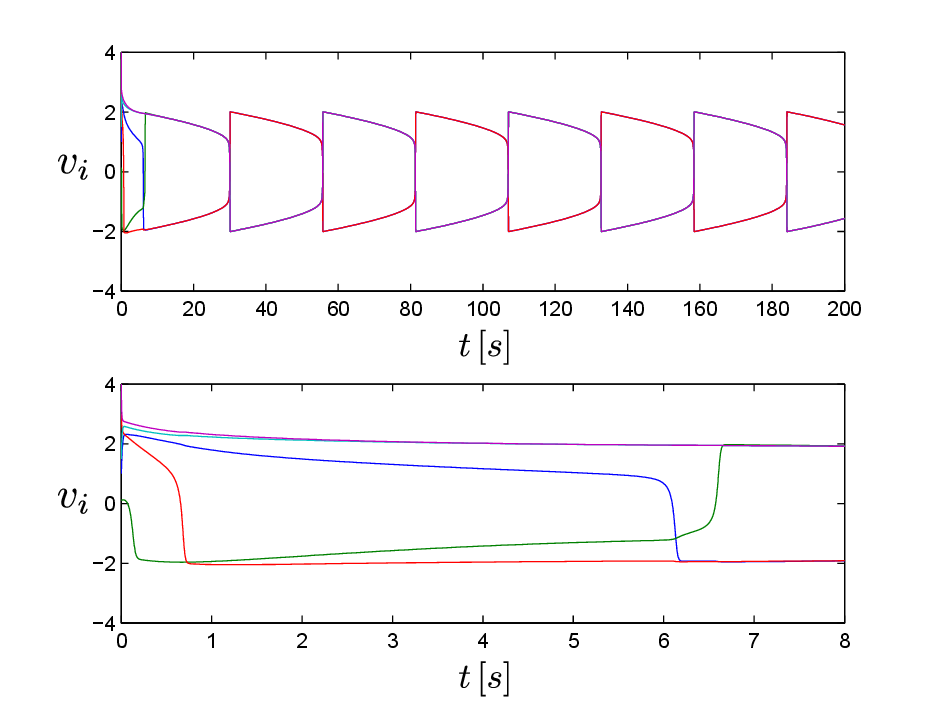}
  \caption{Top panel: time behavior of \eqref{eqn:FN}, with $k=1$. Note that two clusters of anti-synchronized nodes arise. Bottom panel: transient behavior of network nodes. Initial conditions are randomly taken from the standard distribution.}
  \label{fig:pattern_FN}
  \end{center}
\end{figure}
%
%
%
\subsection{Generating wave patterns}
\label{sec:example_multi}
%
%
\begin{figure}[t]
\begin{center}
\includegraphics[width=7cm]{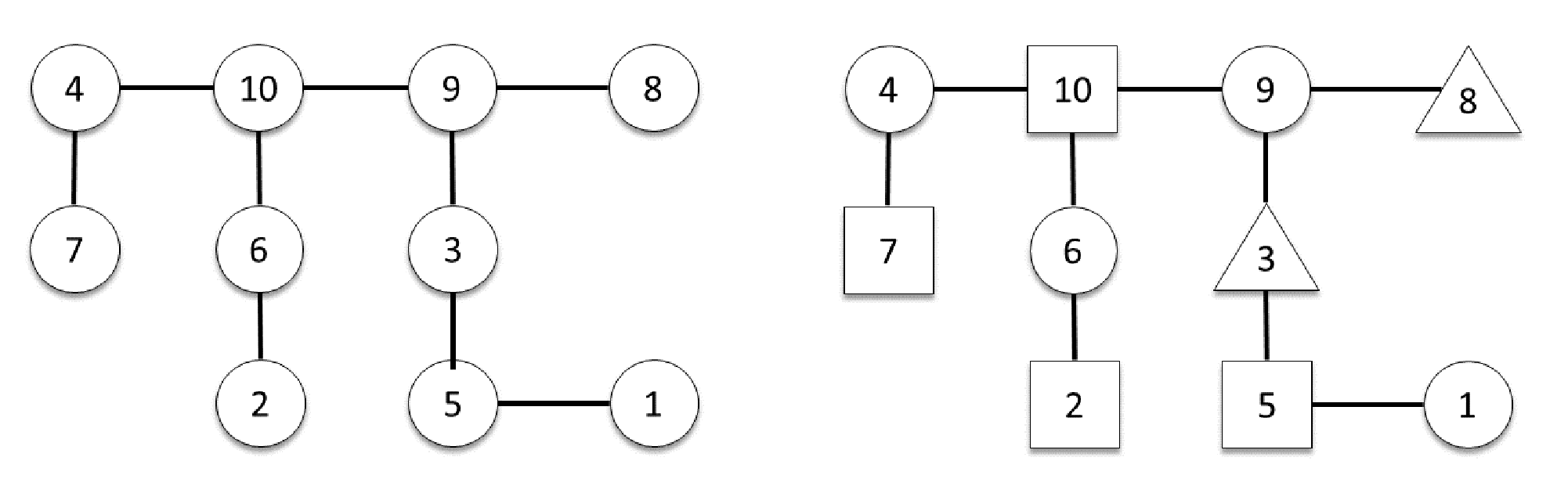}
\caption{Left panel: a network of diffusively coupled harmonic oscillators \eqref{eq:harmonic_osc}. Right panel: a  network partition:  $\mathcal{G}_1$ (squares), $\mathcal{G}_2$ (circles), $\mathcal{G}_3$ (triangles).}
\label{fig:ex_multi_net}
\end{center}
\end{figure}
In order to illustrate the application of Theorem \ref{thm:multipartite}, we consider the problem of generating a {\em discrete rotating wave} pattern for a network of harmonic oscillators.
%
%
%
Specifically, we consider a network of $N=10$ identical harmonic oscillators with topology as in Figure \ref{fig:ex_multi_net} (left panel). The harmonic oscillator dynamics is described by
\begin{equation}
\label{eq:harmonic_osc}
\dot{x}=Ax=
\begin{bmatrix}
0 & -\omega\\
\omega & 0
\end{bmatrix}
x .
\end{equation}
The symmetries of \eqref{eq:harmonic_osc} are those described by rotations by an angle $\phi\in[0,2\pi)$. That is, $\gamma$ belongs to the special orthonormal group $\mathbb{SO}(2)$ or, in matrix form, $\gamma=
\begin{bmatrix}
\cos\,\phi & -\sin\,\phi\\
\sin\,\phi & \cos\,\phi
\end{bmatrix}.$ It is important to note that a set of weakly coupled nonlinear oscillators can be transformed via the so-called \emph{phase reduction} \cite{kuramoto1984chemical} into a new set of ODEs that is equivariant with respect to the circle group $\mathbb{S}^1$, which is isomorphic to $\mathbb{SO}(2)$. To satisfy hypothesis \textbf{H1} of Theorem \ref{thm:multipartite}, consider, for example, three symmetries $\gamma_1$, $\gamma_2$ and $\gamma_3$ associated to rotations by $\phi_1=0^\circ$, $\phi_2=120^\circ$ and $\phi_3=240^\circ$, respectively, and consider the network nodes partitioned into $\mathcal{G}_1=\{2,5,7,10\}$, $\mathcal{G}_2=\{1,4,6,9\}$ and $\mathcal{G}_3=\{3,8\}$ associated to the respective symmetries, as reported in Figure \ref{fig:ex_multi_net} (right panel). Applying the coupling functions in accordance to \textbf{H2} of Theorem \ref{thm:multipartite}, the network dynamics is described by \eqref{eq:network_X_multi} where $F(X)=(I_N\otimes A)X$ and the matrix $D$ is the diagonal matrix $D=\mathrm{diag}\,\left\{\gamma_2,\gamma_1,\gamma_3,\gamma_2,\gamma_1,\gamma_2,\gamma_1,\gamma_3,\gamma_2,\gamma_1\right\}$. Furthermore, following Theorem $5.1$ in \cite{diB_Liu_Rus_14}, it can be shown that the auxiliary network \eqref{eq:network_Z_multi} synchronizes for any $k>0$, and therefore all hypotheses of Theorem \ref{thm:multipartite} are verified. As expected (see Figure \ref{fig:ex_multi}) the nodes belonging to the same cluster synchronize with each others, with a phase delay of $120^\circ$ between the clusters. 

\begin{figure}[t]
\begin{center}
\includegraphics[width=6.5cm]{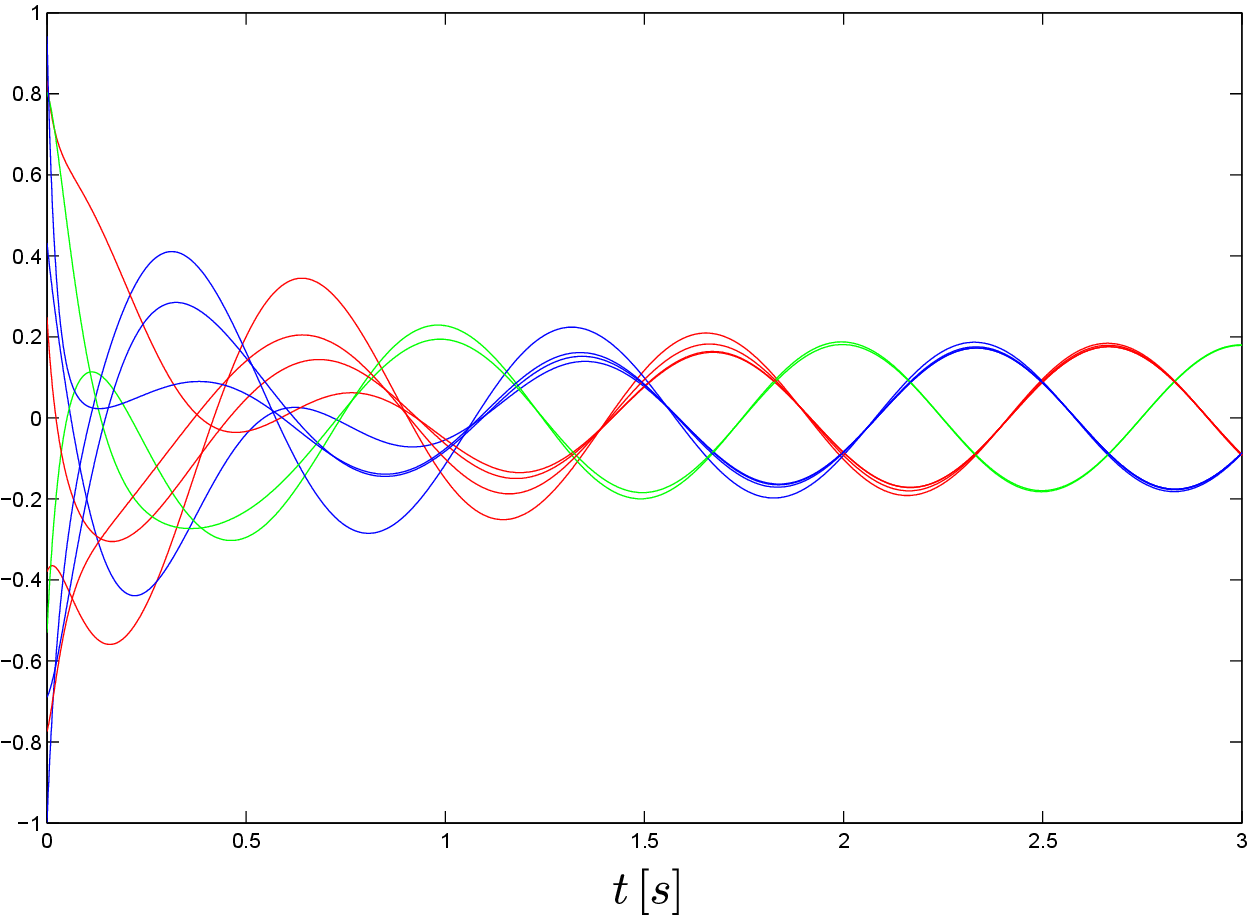}
\caption{Time behavior of $x_1$ of the nodes in $\mathcal{G}_1$ (blue), $\mathcal{G}_2$ (red) and $\mathcal{G}_3$ (green), with $\omega=1$ and $k=10$. Initial conditions are randomly taken from the uniform distribution on the unit circle.}
\label{fig:ex_multi}
\end{center}
\end{figure}


\section{Conclusions}
\label{sec:conclusions}

We showed that symmetries of the nodes' dynamics can be exploited to guarantee the onset of synchronization/consensus patterns in networks. After presenting a set of sufficient conditions ensuring emergence of both bipartite and multipartite synchronization/consensus patterns, we demonstrated the effectiveness of our methodology through a set of representative examples.
Our results generalize those presented in \cite{altafini2013consensus} to higher dimensional and nonlinear settings. Future work will be aimed at relaxing the assumptions of identical node dynamics and linear coupling. Also, the investigation of the link between our results and the theory of equivariant dynamics in \cite{golubitsky2015recent,josic2006network,whalen2015observability} deserves further attention.







\bibliographystyle{IEEEtran} 
\bibliography{refs}   


\end{document}